%% file: root.tex
\documentclass[letterpaper, 10 pt, conference]{ieeeconf}
\IEEEoverridecommandlockouts
\usepackage[utf8]{inputenc} 
\usepackage{dsfont}

\usepackage{amsfonts}
\usepackage{amsmath}
\usepackage{amssymb}
\usepackage{hyperref}       
\usepackage{url}            
\usepackage{booktabs}       

\usepackage{nicefrac}       
\usepackage{microtype}      
\usepackage{algorithmic}
\usepackage{mathrsfs}
\usepackage{graphicx}
\usepackage{textcomp}
\usepackage{xcolor}
\usepackage{subcaption}

\newtheorem{corollary}{Corollary}
\newtheorem{lemma}{Lemma}

\newtheorem{assumption*}{Assumption}
\newtheorem{stdassumption*}{Standing Assumption}

\newtheorem{proposition}{Proposition}
\newtheorem{definition}{Definition}
\newtheorem{definition*}{Definition}

\usepackage{mathtools}
\graphicspath{ {./images/} }

\usepackage[ruled,vlined]{algorithm2e}

\usepackage{cleveref}

\def\BibTeX{{\rm B\kern-.05em{\sc i\kern-.025em b}\kern-.08em
    T\kern-.1667em\lower.7ex\hbox{E}\kern-.125emX}}
    
\begin{document}

\title{\bf \LARGE 
  Differential Privacy in Generative AI Agents:\\ Analysis and Optimal Tradeoffs
}

\author{Ya-Ting Yang and Quanyan Zhu
\thanks{Authors are with the Department of Electrical and Computer Engineering, New York University, NY, 11201, USA {\tt \{yy4348, qz494\}@nyu.edu}}
}

\maketitle

\input{abstract}

\input{introduction}

\input{relatedwork}

\input{model}

\input{DP}

\input{problem}

\input{experiment}

\input{conclusion}

\bibliographystyle{abbrv}
\bibliography{ref}

\end{document}

%% file: abstract.tex
\begin{abstract}
Large language models (LLMs) and AI agents are increasingly integrated into enterprise systems to access internal databases and generate context-aware responses. While such integration improves productivity and decision support, the model outputs may inadvertently reveal sensitive information. Although many prior efforts focus on protecting the privacy of user prompts, relatively few studies consider privacy risks from the enterprise data perspective. Hence, this paper develops a probabilistic framework for analyzing privacy leakage in AI agents based on differential privacy. We model response generation as a stochastic mechanism that maps prompts and datasets to distributions over token sequences. Within this framework, we introduce token-level and message-level differential privacy and derive privacy bounds that relate privacy leakage to generation parameters such as temperature and message length. We further formulate a privacy–utility design problem that characterizes optimal temperature selection. 

\end{abstract}

%% file: introduction.tex
\section{Introduction}
\label{sec:intro}

Enterprises are increasingly deploying large language models (LLMs) and AI agents to support decision making, automate workflows, and enable natural-language access to internal knowledge bases \cite{10852443,11022699}. In many deployments, these systems are connected to proprietary databases, internal documents, and operational data sources in order to generate context-aware and informative responses. While such integration significantly enhances the functionality and efficiency of enterprise systems, it also introduces important privacy and security concerns \cite{10.1145/3773080,11081880,yan2025protecting,das2025security}.

A central risk arises from potential information leakage through model outputs. When an LLM generates responses conditioned on internal data, the output may inadvertently reveal sensitive information contained in the underlying database \cite{11062758}. Even when confidential records are not directly disclosed, adversaries may infer sensitive attributes by strategically crafting queries and analyzing the model’s responses. Through repeated interactions, attackers can gradually reconstruct hidden information about individuals, business operations, or proprietary assets. Such inference-based leakage is particularly concerning because it may occur indirectly through seemingly benign responses.

To mitigate these risks, existing enterprise solutions largely rely on guardrail-based protection mechanisms \cite{10916629}. These approaches typically include prompt filtering, rule-based detection, sanitization, retrieval restrictions, and post-generation content moderation \cite{11409403,11296166,zhang2025dyntext}. Although such guardrails provide a practical layer of defense, they are often implemented as ad hoc mechanisms tailored to specific prompts or sensitive attributes \cite{yue2021differential,thareja2025dp}. Consequently, they lack a principled theoretical foundation and provide limited guarantees regarding the level of privacy protection achieved.

Another challenge is the rarely few well-defined metrics for quantifying privacy leakage in LLM responses. Without formal measures of how much information can be inferred from generated outputs, it becomes difficult to evaluate the effectiveness of guardrails or compare different privacy protection strategies. Current industry practices often rely on empirical, heuristic, or manual evaluations or manual, which may not scale well to complex enterprise environments \cite{11401523}.

The problem becomes even more challenging in dynamic data environments. In many enterprise systems, databases are continuously updated as new records are added, modified, or removed. Each change in the underlying data may alter the information encoded in an LLM's responses. As a result, guardrails designed for a previous database state may quickly become outdated. Maintaining accurate and robust privacy protection mechanisms in such evolving environments is therefore operationally difficult and may leave temporary windows of vulnerability.


This paper addresses these challenges by developing a systematic framework for analyzing privacy leakage in LLM agents from a probabilistic perspective. The contributions of this work are fourfold. First, we propose an input–output system-level model that represents an LLM agent as a stochastic generative mechanism mapping prompts, contextual information, and datasets to distributions over messages. Second, we characterize inference-based privacy leakage at both the token and message levels. Third, we derive explicit privacy bounds that relate privacy leakage to generation parameters such as message length and temperature. Finally, we study an optimal privacy–utility design problem for LLM agents, showing how system parameters can be tuned to balance response quality and privacy protection.

%% file: relatedwork.tex
\section{Literature Review}

Prior research has studied privacy leakage and  disclosure of sensitive information problems in large language models (LLMs) from several complementary perspectives.

\subsection{Enterprise Guardrails and Secure Architectures}

Empirical studies have examined the vulnerability of LLM systems when exposed proprietary enterprise data. The study in \cite{11398712} evaluates whether LLMs retain and disclose confidential organizational information such as HR records, financial documents, and source code. Their experiments show that models can leak sensitive information under prompt injection or role-based manipulation attacks, particularly in the context of GDPR and intellectual property, highlighting the need for stronger privacy protection in enterprise deployments.

Many proposed enterprise deployments rely on guardrail-based security mechanisms to reduce the risk of data leakage. These systems typically introduce intermediate components that sanitize inputs or constrain outputs when interacting with LLM services. For example, \cite{10916629} proposes a privacy-focused gatekeeper LLM that pseudonymizes personally identifiable information (PII) and assigns unique identifiers as new mapping. Similarly, Portcullis \cite{zhan2025portcullis} acts as a privacy gateway that anonymizes sensitive data through parallel substitution, securely communicates with other services, and reconstructs responses within encrypted memory. SeLLMA \cite{wacker2025sellma} then employs advanced PII detection and secure data transformation techniques to safeguard sensitive information during LLM interactions. 

Another line of efforts focus on designing secure deployment architectures or secure pipelines for enterprise environments. The framework in \cite{11296166} introduces a modular system incorporating input sanitization, compliance policy enforcement, role-based access control, secure data flows, and audit logging to satisfy regulatory requirements such as HIPAA and GDPR. Similarly, SHIELD \cite{narajala2025securing} proposes mitigation strategies for organizations such as monitoring and segmentation tailored for different conditions. PrivacyAsst \cite{10458329} proposes a privacy-preserving framework specifically for tool-using LLM agents that combines homomorphic encryption with attribute-shuffling mechanisms to conceal user inputs while preserving functionality.

\subsection{Differential Privacy for Language Models}

Differential privacy (DP) has emerged as one of the most widely studied mechanisms for providing formal privacy guarantees in machine learning. Several works adapt DP mechanisms to LLM training, fine-tuning, and inference processes. For example, \cite{majmudar2022differentially} proposes a lightweight perturbation mechanism applied during the decoding stage of trained models to avoid the significant computational cost of retraining privacy-preserving LLMs from scratch. Similarly, \cite{chen2025protecting} introduces Privacy-Flat, a framework that improves the privacy-utility trade-off in DP-SGD-trained models by controlling weight flatness.

Several works also introduce adaptive token-level privacy mechanisms. For example, \cite{yue2021differential} partitions text into sensitive and non-sensitive token sets and allocates different privacy budgets to each group. DYNTEXT \cite{zhang2025dyntext} adjusts the protection level of tokens based on their semantic sensitivity to balance the privacy–utility trade-off. Similarly, \cite{11409403} performs risk identification and personalized labeling before applying risk-aware LDP sanitization mechanisms for both token-level and sentence-level privacy. DP-Fusion \cite{thareja2025dp} proposes a DP inference mechanism that blends the output distributions of models with and without sensitive tokens, ensuring that the influence of sensitive context tokens in the generated output remains bounded.

Other studies investigate DP mechanisms for specific data modalities. PrivCode \cite{liu2025privcode} proposes a two-stage framework for generating differentially private synthetic code datasets while preserving code structure and utility. Meanwhile, \cite{wang2026privacy} studies privacy trading mechanisms in federated LLM fine-tuning using a game-theoretic framework. When it comes to retrieval-augmented generation (RAG), \cite{koga2024privacy} explores differentially private retrieval mechanisms and allocates privacy budgets to selected tokens with sensitive information.




%% file: model.tex
\section{Generative Message Model}

We consider a system composed of $N$ interacting agents indexed by $i \in \mathcal{N} = \{1,\ldots,N\}$. At each time step $t = 0,1,2, \ldots$, agents can communicate by producing messages that influence the information available to other agents to shape the evolution of the interaction over time. The objective of this model is to describe how agents based on language models generate responses and how their internal reasoning states can be represented probabilistically. We assume that all random variables are defined  on a probability space $(\Omega,\mathcal{F},\mathbb{P})$ throughout the analysis.

\subsection{Token and Message Spaces}

Each agent $i$ operates using a token vocabulary $\mathcal{V}_i$. In practice, tokens may correspond to words, subwords, or characters, depending on the model's tokenization scheme. A message produced by the language model is therefore a finite sequence of tokens drawn from this vocabulary.

\begin{definition}[Message Space]
Let $\mathcal{V}_i$ denote the token vocabulary of agent $i$. For each integer $L \ge 1$, let $\mathcal{V}_i^L$ denote the set of all token sequences of length $L$, that is, $\mathcal{V}_i^L=\{(w_1,\ldots,w_L) : w_k \in \mathcal{V}_i \text{ for } k=1,\ldots,L\}$. The message space of agent $i$ is defined as $\mathcal{X}_i = \bigcup_{L=1}^{\infty} \mathcal{V}_i^{L}$.
\end{definition}

Although natural language can generate many possible sentences, this representation allows us to treat each message as an element of a well-defined mathematical space. For mathematical analysis, we equip $\mathcal{X}_i$ with the discrete $\sigma$-algebra. A message generated by agent $i$ at time $t$ is therefore represented as a random variable $m_i^t : \Omega \to \mathcal{X}_i$.

\subsection{Prompts and Information Sets}
The message generated by the language model depends on contextual inputs available to the agent. First, each agent receives a prompt $p_i^t$ that initiates the generation process. We assume that prompts belong to a prompt space $\mathcal{P}_i$, so that  $p_i^t \in \mathcal{P}_i$. Second, the agent has access to an information set that summarizes all contextual information available when generating the response.


A general representation of the information set is $I_i^t =\left(m_1^{0:t-1},\ldots,m_N^{0:t-1}, M^t, D_i^t, o_i^t\right)$. Here $m_j^{0:t-1}=(m_j^0,\ldots,m_j^{t-1})$ denotes the sequence of messages previously generated by agent $j$. The variable $M^t$ denotes shared memory that may be accessible to all agents, such as shared conversation history. The variable $D_i^t$ represents documents retrieved by agent $i$ from external sources, such as retrieval-augmented generation systems. Finally, $o_i^t$ denotes observations obtained from the surrounding environment. The pair $(p_i^t,I_i^t)$ therefore represents the epistemic state of agent $i$ at time $t$.

\subsection{Token-Level Generative Mechanism}

Modern language models generate text by sampling tokens sequentially from probability distributions produced by the model. A message generated by agent $i$ at time $t$ is a finite token sequence  $m_i^t=(w_{i,1},\ldots,w_{i,L_i^t})$, where each token satisfies $w_{i,k}\in\mathcal{V}_i$ and $L_i^t$ denotes the length of the message. At each position $k$, the probability of the next token depends on the previously generated tokens and the contextual inputs available to the agent. Formally, the language model defines conditional token probabilities $\pi_i^\theta(w_{i,k} | w_{i,1:k-1}, p_i^t, I_i^t)$, where $p_i^t$ denotes the prompt, $I_i^t$ denotes the information set available to the agent, and $\theta$ denotes the internal parameters of the model.
The probability of generating the token sequence 
$m_i^t=(w_{i,1},\ldots,w_{i,L_i^t})$ therefore satisfies the autoregressive factorization
\begin{equation}
\mathbb{P}_\theta(m_i^t | p_i^t,I_i^t)=
\prod_{k=1}^{L_i^t}
\pi_i^\theta(w_{i,k} | w_{i,1:k-1}, p_i^t, I_i^t).
\label{eq:autoregressive_factorization}
\end{equation}

The token-level generation process induces a probability distribution over the message space $\mathcal{X}_i$. For any message $x=(w_1,\ldots,w_L)\in\mathcal{X}_i$, the probability assigned to that message is given by
\begin{equation}
Q_i^t(x)
=
\mathbb{P}_\theta(x \mid p_i^t,I_i^t). 
\label{eq:induced_message_distribution}
\end{equation}
The message produced by the agent is obtained by sampling $m_i^t \sim Q_i^t$. Hence response generation can be viewed as a two-stage mechanism: contextual inputs determine a distribution over messages, and a specific message is then sampled from that distribution.
%
%
For many modern language models the token-level probabilities are explicitly observable. At each generation step, the model produces a vector of logits over the vocabulary $\mathcal{V}_i$. Applying the softmax transformation to these logits yields the conditional token probabilities $\pi_i^\theta(\cdot)$ used in 
\eqref{eq:autoregressive_factorization} and 
\eqref{eq:induced_message_distribution}.

\subsection{Temperature and Length of the Message}

The internal parameter vector $\theta_i$ influences the geometry of the token generation process. Two particularly important control parameters are the decoding temperature $T_i$ and the length of the generated message $L_i^t$. These parameters affect the dispersion of the token distribution and the complexity of the resulting message sequence.

A key component of the token generation mechanism is the logit function produced by the language model. Let $\ell_i : \mathcal{V}_i \times \mathcal{V}_i^{*} \times \mathcal{P}_i \times \mathcal{I}_i \mapsto \mathbb{R},$ denote the logit mapping associated with agent $i$, where $\ell_i(w \mid w_{1:k-1},p_i^t,I_i^t)$ assigns a real-valued score to token $w \in \mathcal{V}_i$ given the previously generated token sequence $w_{1:k-1}$ and the contextual inputs $(p_i^t,I_i^t)$. These scores represent the model's relative preference for different tokens.

The logits are converted into probabilities through a temperature-scaled softmax transformation. The conditional token distribution is therefore defined by
$$\pi_i^\theta(w | w_{1:k-1},p_i^t,I_i^t)=\frac{\exp\!\left(\ell_i(w | w_{1:k-1},p_i^t,I_i^t)/T_i\right)}{\sum_{v\in\mathcal{V}_i}\exp\!\left(\ell_i(v | w_{1:k-1},p_i^t,I_i^t)/T_i\right)},$$
where $T_i>0$ denotes the temperature parameter. The temperature parameter determines the dispersion of the token distribution. As $T_i \to 0$, the distribution concentrates on tokens with the highest logit values, and the model approaches deterministic decoding. As $T_i \to \infty$, the distribution approaches the uniform distribution over $\mathcal{V}_i$. 

The second important control parameter is the message length $L_i^t$, which determines the number of tokens generated in the response. The length parameter affects the size of the sequence space and, therefore, the complexity of the induced message distribution. Consequently, longer messages correspond to higher-dimensional sequence spaces and may lead to more diverse semantic outcomes.

%% file: DP.tex
\section{Differential Privacy of LLM-Agent}
In this section, we study the privacy properties of the response-generation mechanism of a language model agent. Consider a single LLM agent $i$. At time $t$ the agent receives inputs $(p_i^t, D_i^t, I_i^t)$, where $p_i^t$ denotes the prompt, $D_i^t$ denotes the dataset accessible to the agent; with a slight abuse of notation, we let $I_i^t$ encompass contextual information excluding $D_i^t$. The agent produces a response $m_i^t$ that belongs to the message space $\mathcal X_i$ defined earlier. Because the generation process is stochastic, the output message is drawn from a probability distribution over $\mathcal X_i$, that is, $m_i^t \sim Q_i^t(\cdot \mid p_i^t, D_i^t, I_i^t)$. Equivalently, the response generation process can be viewed as a randomized mechanism $\mathcal M_i : (p_i^t, D_i^t, I_i^t) \mapsto \mathcal X_i$, which maps the inputs to a random message in the message space. 

\subsection{Message-Level Differential Privacy}

Differential privacy formalizes the requirement that the output of a mechanism should not depend strongly on any single record in the dataset. Let $\mathcal D$ denote the space of all possible datasets. A dataset $D \in \mathcal D$ is a finite collection of records $D=(r_1,\ldots,r_n)$ where each record $r_j$ belongs to a data domain $\mathcal R$. Datasets $D,D' \in \mathcal D$ are said to be \emph{neighboring} ($D \sim D'$), if they differ in exactly one record.
\begin{definition}[Message Differential Privacy]
The generation mechanism $\mathcal M_i$ satisfies
$(\varepsilon,\delta)$-DP if for every neighboring datasets $D\sim D'$, for all prompts $p_i^t$, and for every measurable output set $S \subseteq \mathcal X_i$,
\begin{equation*}
    \mathbb P(m_i^t \in S | p_i^t, D, I_i^t)\le e^{\varepsilon} \mathbb P(m_i^t \in S | p_i^t, D', I_i^t)+\delta.
\end{equation*}
\label{def:message_DP}
\vspace{-3mm}
\end{definition}

\subsection{Token-Level Differential Privacy}

A generated message $m_i^t$ can be written as a token sequence $m_i^t=(w_{i,1},\ldots,w_{i,L_i^t})$, Let
$h_k=(w_{i,1:k-1},p_i^t,I_i^t)$ denote the token history at step $k$, the conditional token distribution is written as $\pi_{i,D}^\theta(w_{i,k}\mid h_k)$.

\begin{definition}[Token Differential Privacy]
The token generation mechanism satisfies $(\varepsilon_k,\delta_k)$-DP at step $k$ if for every neighboring datasets $D \sim D'$, for all token histories $h_k$, and for all tokens $w \in \mathcal V_i$,
\begin{equation*}
    \pi_{i,D}^\theta(w\mid h_k)\le e^{\varepsilon_k} \pi_{i,D'}^\theta(w\mid h_k) + \delta_k.
\end{equation*}
\label{def:token_DP}
\end{definition}
\begin{lemma}
If the token generation mechanism at each step $k$ satisfies $(\varepsilon_k,\delta_k)$-DP in Definition \ref{def:token_DP}, then the induced message-generation mechanism $\mathcal M_i : (p_i^t,D_i^t,I_i^t) \to \mathcal X_i$ satisfies $(\varepsilon,\delta)$-DP in Definition \ref{def:message_DP} at the message level with $\varepsilon = \sum_{k=1}^{L_i^t}\varepsilon_k, \delta = \sum_{k=1}^{L_i^t}\delta_k$.
\label{lem:DP_sum}
\end{lemma}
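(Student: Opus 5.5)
The proof goes by induction on the message length $L=L_i^t$, peeling off the first token and treating the remaining $L-1$ tokens as a mechanism of the same form. Fix neighboring $D\sim D'$, a prompt $p_i^t$ and context $I_i^t$, and write $\pi_D(\cdot\mid h_k)$ for $\pi_{i,D}^\theta(\cdot\mid h_k)$. For a set $S\subseteq\mathcal V_i^L$ and a first token $w\in\mathcal V_i$, let $S_w=\{(w_2,\ldots,w_L):(w,w_2,\ldots,w_L)\in S\}$. By the autoregressive factorization \eqref{eq:autoregressive_factorization},
\begin{equation*}
\mathbb P_D(S)=\sum_{w\in\mathcal V_i}\pi_D(w\mid h_1)\,f_D(w),\qquad f_D(w)=\mathbb P_D(S_w\mid w_{i,1}=w),
\end{equation*}
and similarly for $D'$. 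The conditional law of the remaining tokens given $w_{i,1}=w$ is itself an autoregressive mechanism of length $L-1$ whose steps satisfy $(\varepsilon_k,\delta_k)$ for $k=2,\ldots,L$. The induction hypothesis therefore gives
\begin{equation*}
f_D(w)\le e^{\varepsilon'}f_{D'}(w)+\delta',\qquad \varepsilon'=\sum_{k\ge2}\varepsilon_k,\quad \delta'=\sum_{k\ge2}\delta_k .
\end{equation*}
The base case $L=1$ is exactly the token-level condition at step $1$.

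\textbf{Keeping $\delta$ additive.} Substituting the bound on $f_D$ naively would multiply $\delta'$ by $e^{\varepsilon_1}$. To avoid this, I will use that $f_D\le1$ to write
\begin{equation*}
f_D(w)\le\min\{1,e^{\varepsilon'}f_{D'}(w)\}+\delta'=:g(w)+\delta',\qquad g(w)\in[0,1].
\end{equation*}
Summing against $\pi_D(\cdot\mid h_1)$ gives $\mathbb P_D(S)\le\sum_w\pi_D(w\mid h_1)g(w)+\delta'$. It then suffices to show
\begin{equation*}
\sum_w\pi_D(w\mid h_1)g(w)\le e^{\varepsilon_1}\sum_w\pi_{D'}(w\mid h_1)g(w)+\delta_1 ,
\end{equation*}
because $g\le e^{\varepsilon'}f_{D'}$ and $\sum_w\pi_{D'}(w\mid h_1)f_{D'}(w)=\mathbb P_{D'}(S)$. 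This yields $\mathbb P_D(S)\le e^{\varepsilon_1+\varepsilon'}\mathbb P_{D'}(S)+\delta_1+\delta'$, which closes the induction.

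\textbf{Main obstacle.} The step I expect to be hardest is the displayed inequality for a $[0,1]$-valued test function $g$. Definition~\ref{def:token_DP} is stated pointwise in $w$, with an additive $\delta_k$ per token. Summing that pointwise bound over the vocabulary would inflate $\delta_1$ to $|\mathcal V_i|\delta_1$.

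\emph{Case $\delta_k=0$.} The pointwise ratio bound suffices directly, since $\pi_D\le e^{\varepsilon_1}\pi_{D'}$ pointwise.

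\emph{General case.} I will read the token condition at the level of subsets $A\subseteq\mathcal V_i$, which is the standard meaning of $(\varepsilon_k,\delta_k)$-DP for the step-$k$ distribution. I will then use the layer-cake identity
\begin{equation*}
\sum_w\pi(w)g(w)=\int_0^1\pi(\{w:g(w)>s\})\,ds .
\end{equation*}
Applying the set-level bound to each level set $\{g>s\}$ and integrating over $s\in[0,1]$ gives the required inequality with exactly $\delta_1$.

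\textbf{Variable-length messages.} Here I will either condition on the realized length $L_i^t$ or append an end-of-sequence token and pad. Either way, the message law is an autoregressive mechanism of length $L_i^t$, so the same induction applies.
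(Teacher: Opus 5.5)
Your argument is correct, but it is not the paper's route. The paper's proof is a one-line appeal to the basic sequential composition theorem: the autoregressive factorization \eqref{eq:autoregressive_factorization} exhibits $\mathcal M_i$ as an adaptive composition of $L_i^t$ token mechanisms, so the parameters add. You instead reprove that composition theorem by induction on the length. You peel off the first token and apply the induction hypothesis uniformly over the prefix $w$; this is where ``for all histories $h_k$'' in Definition~\ref{def:token_DP} is used. You keep $\delta$ additive via the truncation $g=\min\{1,e^{\varepsilon'}f_{D'}\}$ and the layer-cake identity. The citation buys brevity. Your version is self-contained, makes the adaptivity and the $\delta$ bookkeeping explicit, and shows exactly where a set-level hypothesis enters.

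You should state that last point more strongly. Reading the token condition at the level of subsets $A\subseteq\mathcal V_i$ is necessary, not merely convenient: under the pointwise Definition~\ref{def:token_DP} the lemma is false whenever $\delta_k>0$. Take $L_i^t=1$, $\varepsilon_1=0$, and a vocabulary of size $2n$ with $n\ge 2$. Let $\pi_D$ be uniform on the first $n$ tokens and $\pi_{D'}$ uniform on all $2n$. Then $|\pi_D(w)-\pi_{D'}(w)|=1/(2n)$ for every $w$, so the pointwise condition holds with $(0,1/(2n))$ in both directions. Yet the set $S$ of the first $n$ tokens has $\mathbb P_D(S)=1$ and $\mathbb P_{D'}(S)=1/2$. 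The paper's appeal to composition silently assumes the standard set-level notion too, so your proof establishes exactly what the paper's does. In the pure case $\delta_k=0$, which is the one used in Corollary~\ref{cor:message_bound}, the two readings coincide and your first case already suffices.

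One minor caveat on your last paragraph. The statement fixes $L_i^t$, so nothing more is needed there. If the length were data-dependent, conditioning on the realized length would not be legitimate, since the law of the length itself depends on $D$. Only the end-of-sequence padding route works, with the sums running up to the maximal length.
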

\begin{proof}
    According to \eqref{eq:autoregressive_factorization}, the generation process can therefore be viewed as a sequential composition of $L_i^t$ randomized mechanisms. By the composition property of DP, the privacy loss parameters accumulate additively across sequential mechanisms. Consequently, the overall response generation mechanism satisfies $(\varepsilon,\delta)$-DP with $\varepsilon=\sum_{k=1}^{L_i^t}\varepsilon_k$ and $\delta=\sum_{k=1}^{L_i^t}\delta_k$.
\end{proof}

\begin{proposition}[Token-Level Privacy Bound]\label{thm:temp-token-privacy}
Assume that the logit mapping $\ell_i(w,h)$ satisfies $\sup_{w,h} |\ell_{i,D}(w,h)-\ell_{i,D'}(w,h)| \le \Delta$ for all neighboring datasets $D \sim D'$, then the token-generation mechanism defined by the temperature-scaled softmax $\pi_i^\theta(w \mid h)=\frac{\exp(\ell_i(w,h)/T_i)}{\sum_{u\in\mathcal V_i}\exp(\ell_i(u,h)/T_i)}$ satisfies $\varepsilon_k$-DP at each token step with $\varepsilon_k \le \frac{2\Delta}{T_i}$.
\label{prop:token_bound}
\end{proposition}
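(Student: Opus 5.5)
The argument is the standard exponential-mechanism ratio bound: we show that for every neighboring pair $D \sim D'$, every history $h$ and every token $w \in \mathcal V_i$,
\[
\frac{\pi_{i,D}^\theta(w\mid h)}{\pi_{i,D'}^\theta(w\mid h)} \le e^{2\Delta/T_i}.
\]
This is exactly the condition of Definition \ref{def:token_DP} with $\delta_k = 0$ and $\varepsilon_k = 2\Delta/T_i$. Since the vocabulary is discrete and every token probability is strictly positive (softmax of finite logits), a pointwise ratio bound suffices, and no measurability issues arise.

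First, we write the ratio as a product of two factors:
\[
\frac{\pi_{i,D}^\theta(w\mid h)}{\pi_{i,D'}^\theta(w\mid h)}
= \exp\!\Big(\frac{\ell_{i,D}(w,h)-\ell_{i,D'}(w,h)}{T_i}\Big)\cdot
\frac{\sum_{u\in\mathcal V_i}\exp(\ell_{i,D'}(u,h)/T_i)}{\sum_{u\in\mathcal V_i}\exp(\ell_{i,D}(u,h)/T_i)}.
\]
By the sensitivity assumption $|\ell_{i,D}-\ell_{i,D'}|\le\Delta$ uniformly in $(w,h)$, the first factor is at most $e^{\Delta/T_i}$.

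For the second factor (the normalizing constants), we use $\ell_{i,D'}(u,h)\le \ell_{i,D}(u,h)+\Delta$ for every $u$. Hence each summand of the numerator is at most $e^{\Delta/T_i}$ times the corresponding summand of the denominator, and summing over $u$ bounds the second factor by $e^{\Delta/T_i}$. Multiplying the two factors gives $e^{2\Delta/T_i}$. By symmetry in $D$ and $D'$, the same bound holds for the reverse ratio, so the token mechanism is $(2\Delta/T_i,0)$-DP at every step $k$. This yields $\varepsilon_k \le 2\Delta/T_i$.

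The only step needing care is the normalizer bound. The argument uses that the sensitivity bound holds simultaneously for all tokens $u$ at the same history $h$, which the uniform $\sup_{w,h}$ assumption provides. If $\mathcal V_i$ were infinite, we would additionally need the partition sums to be finite, but vocabularies here are finite.

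It is worth remarking that the factor $2$ is generally not removable by this argument. It could be tightened to $\Delta/T_i$ only if the logit shifts had a common sign, which is not assumed. As a consequence, combining this bound with Lemma \ref{lem:DP_sum} gives message-level $\varepsilon \le 2\Delta L_i^t/T_i$.
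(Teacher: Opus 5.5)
Your proof is correct and follows the same route as the paper. Both split the likelihood ratio into the ratio of unnormalized weights and the ratio of normalizers $Z_{D'}/Z_D$, bound each factor by $e^{\Delta/T_i}$ using the uniform sensitivity assumption, and multiply to get $\varepsilon_k \le 2\Delta/T_i$ with $\delta_k = 0$; you additionally spell out the termwise normalizer bound and the reverse direction, which the paper leaves implicit.
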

\begin{proof}
Let $Z_D=\sum_{u\in\mathcal V_i}\exp(\ell_{i,D}(u,h)/T_i)$ denote the
normalization constant.
The assumption
$|\ell_{i,D}(w,h)-\ell_{i,D'}(w,h)|\le \Delta$ implies
$$
e^{-\Delta/T_i}
\le
\frac{\exp(\ell_{i,D}(w,h)/T_i)}
{\exp(\ell_{i,D'}(w,h)/T_i)}
\le
e^{\Delta/T_i},
$$
which yields $e^{-\Delta/T_i} \le Z_D/Z_{D'} \le e^{\Delta/T_i}$.
Combining them gives $\varepsilon_k \le 2\Delta/T_i$.
\end{proof}

\begin{corollary}[Message-Level Privacy Bound]
Assume that the logit mapping $\ell_i(w,h)$ satisfies the $\sup_{w,h} |\ell_{i,D}(w,h)-\ell_{i,D'}(w,h)| \le \Delta$ for all neighboring datasets $D \sim D'$. Then the response generation mechanism satisfies $\varepsilon$-DP at the message level with $\varepsilon \le \frac{2\Delta}{T_i} L_i^t$. 
\label{cor:message_bound}
\end{corollary}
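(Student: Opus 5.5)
The plan is to combine Proposition~\ref{prop:token_bound} with Lemma~\ref{lem:DP_sum}. I will take $\delta_k=0$ at every step, since the token-level bound is pure $\varepsilon$-DP. I will also argue directly from the autoregressive factorization \eqref{eq:autoregressive_factorization}, so that the conclusion does not rest only on an appeal to generic composition.

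\emph{Step 1 (per-token ratio).} Fix neighboring $D\sim D'$, a prompt $p_i^t$, and context $I_i^t$. For each step $k$ and each history $h_k=(w_{1:k-1},p_i^t,I_i^t)$, the sensitivity assumption holds uniformly in $(w,h)$. Hence Proposition~\ref{prop:token_bound} applies to every history simultaneously and gives
\[
\pi_{i,D}^\theta(w\mid h_k)\le e^{2\Delta/T_i}\,\pi_{i,D'}^\theta(w\mid h_k)\quad\text{for all } w\in\mathcal V_i .
\]
Its proof gives this as a multiplicative bound on the ratio. The numerator ratio is at most $e^{\Delta/T_i}$, and the normalizer ratio $Z_{D'}/Z_D$ is at most $e^{\Delta/T_i}$, so the product is at most $e^{2\Delta/T_i}$.

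\emph{Step 2 (pointwise message ratio).} Take any message $x=(w_1,\ldots,w_L)$ with $L=L_i^t$. Multiply the Step~1 inequalities along the factorization \eqref{eq:autoregressive_factorization}. Every factor is nonnegative and each history $h_k$ is the same under $D$ and $D'$, because it is determined by $x$. This gives
\[
\mathbb P(m_i^t=x\mid p_i^t,D,I_i^t)\le e^{2\Delta L/T_i}\,\mathbb P(m_i^t=x\mid p_i^t,D',I_i^t).
\]

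\emph{Step 3 (sets).} The message space $\mathcal X_i$ is countable and carries the discrete $\sigma$-algebra. So for any $S\subseteq\mathcal X_i$, summing the pointwise inequality over $x\in S$ yields Definition~\ref{def:message_DP} with $\delta=0$ and $\varepsilon\le 2\Delta L_i^t/T_i$. Equivalently, Lemma~\ref{lem:DP_sum} with $\varepsilon_k\le 2\Delta/T_i$ and $\delta_k=0$ gives the same sum, $\sum_{k=1}^{L_i^t}\varepsilon_k\le 2\Delta L_i^t/T_i$.

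\emph{Main obstacle.} The step needing care is the treatment of the length $L_i^t$. The statement treats it as fixed, and I will assume it is fixed, or at least does not depend on $D$. If instead length is decided by an end-of-sequence token, then that token is just another token at each step. The same product argument then applies to each message of length $L$, and the bound holds with $L$ replaced by the realized length, or by its maximum. The only other point to check is that the adaptivity of histories causes no trouble. It does not, because the bound in Step~1 is uniform over all histories, so the pointwise product in Step~2 is valid for every realized path.
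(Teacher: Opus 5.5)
Your proposal is correct and follows the paper's route. The paper's proof simply combines Proposition~\ref{prop:token_bound} with the composition Lemma~\ref{lem:DP_sum} (with $\delta_k=0$), and your Steps 2--3 spell out that composition explicitly, as a pointwise product over the autoregressive factorization followed by summation over $S$, which makes the pure-DP case self-contained.
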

\begin{proof}
The proof follows directly from Lemma \ref{lem:DP_sum} and Proposition \ref{prop:token_bound}.
\end{proof}

%% file: problem.tex
\section{Optimal DP Design of LLM-Agent}

Since stronger privacy protection typically degrades utility, parameters such as temperature and message length shape the output distribution and thus jointly influence both utility and privacy.

\subsection{Message-Level Utility}
The utility of a generated response for agent $i$ can be described by a function $\nu_i : \mathcal X_i \times \mathbb N \mapsto \mathbb R$. For the message $m_i^t$ generated at time $t$, the utility (information score) is $\nu_i(m_i^t,L_i^t)$, where the value of a response depends both on the token sequence $m_i^t$ and on the length $L_i^t$ of the generated message.

Recall that the language model induces a probability distribution over messages through the token-generation mechanism. Let $\mathcal M_{i,L_i^t}$ denote the set of all token sequences of length $L_i^t$ constructed from the vocabulary $\mathcal V_i$. For a fixed message length $L_i^t$, the probability that agent $i$ generates the message $m_i^t \in \mathcal M_{i,L_i^t}$ under temperature parameter $T_i$ is
$$\pi_i^{T_i,L_i^t}(m_i^t)= \frac{ \exp\!\left( \frac{1}{T_i} \sum_{k=1}^{L_i^t} \ell_i(w_{i,k}^t,h_{i,k}^t) \right)}{Z_{i,L_i^t}(T_i)},
$$ where $Z_{i,L_i^t}(T_i)$ is the normalization constant.
Then, the expected utility for responses of length $L_i^t$ generated by agent $i$ at temperature $T_i$ is
\begin{equation}
E_{i,L_i^t}(T_i) = \mathbb E_{\pi_i^{T_i,L_i^t}} \!\left[\nu_i(m_i^t,L_i^t)\right]. 
\label{eq:expected_utility_message}
\end{equation}
Assume that the utility of the message-level $\nu_i(\cdot,L_i^t)$ is bounded on $\mathcal M_{i,L_i^t}$, then for all $T_i>0$, $E_{i,L_i^t}(T_i)$ is also bounded, and is continuous on $T_i \in (0,\infty)$.

\begin{proposition}[Derivative and Monotonicity]
\label{prop:EU_derivative_monotone}
Fix $i,t$, $L_i^t$ and then define the cumulative logit score $U_i^t(m)\coloneqq \sum_{k=1}^{L_i^t}\ell_i(w_{i,k},h_{i,k}^t)$. Assume that the message distribution admits the Gibbs form $\pi_i^{T_i,L_i^t}(m)=\exp(U_i^t(m)/T_i)/Z_{i,L_i^t}(T_i)$. Then $E_{i,L_i^t}(T_i)$ is differentiable on $(0,\infty)$ and satisfies
\begin{equation}\label{eq:temperature_derivative}
\frac{dE_{i,L_i^t}}{dT_i}=\frac{-1}{T_i^2} \operatorname{Cov}_{\pi_i^{T_i,L_i^t}} \!\big(\nu_i(m,L_i^t),\,U_i^t(m)\big).
\end{equation}
If the covariance is nonnegative, then $dE_{i,L_i^t}/dT_i\le 0$ and $E_{i,L_i^t}(T_i)$ is nonincreasing in $T_i$.
\end{proposition}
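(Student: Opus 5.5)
The plan is to differentiate the Gibbs average directly. Set $\beta = 1/T_i$ and work on the finite set $\mathcal M_{i,L_i^t}$ (finite because $\mathcal V_i$ is finite and $L_i^t$ is fixed). Then
\[
E_{i,L_i^t}(T_i)=\frac{\sum_{m}\nu_i(m,L_i^t)\,e^{\beta U_i^t(m)}}{\sum_{m} e^{\beta U_i^t(m)}} ,
\]
which is a ratio of finite sums of smooth functions of $\beta$ with a strictly positive denominator. Hence it is $C^\infty$ in $\beta\in(0,\infty)$. Since $\beta=1/T_i$ is a smooth bijection of $(0,\infty)$, $E_{i,L_i^t}$ is differentiable in $T_i$. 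If the vocabulary were infinite, differentiation under the sum would need a domination argument: $\nu_i$ is bounded by assumption, and $U_i^t e^{\beta U_i^t}$ would have to be summable locally uniformly in $\beta$. I would state finiteness of $\mathcal V_i$ explicitly, since this is the only step where any care is needed.

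Next, write $N(\beta)=\sum_m \nu_i e^{\beta U}$ and $Z(\beta)=\sum_m e^{\beta U}$, where $Z$ is the normalization constant $Z_{i,L_i^t}$ expressed in $\beta$. Then $N'(\beta)=\sum_m \nu_i U e^{\beta U}$ and $Z'(\beta)=\sum_m U e^{\beta U}$. The quotient rule gives
\[
\frac{d}{d\beta}\frac{N}{Z}=\frac{N'}{Z}-\frac{N}{Z}\frac{Z'}{Z}=\mathbb E_\pi[\nu_i U]-\mathbb E_\pi[\nu_i]\,\mathbb E_\pi[U]=\operatorname{Cov}_\pi(\nu_i,U),
\]
with $\pi=\pi_i^{T_i,L_i^t}$. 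The chain rule with $d\beta/dT_i=-1/T_i^2$ then yields \eqref{eq:temperature_derivative}.

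For monotonicity: if the covariance is nonnegative for every $T_i>0$, the derivative is $\le 0$ everywhere on $(0,\infty)$. The mean value theorem on $(0,\infty)$ then shows that $E_{i,L_i^t}$ is nonincreasing.

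The main obstacle is interpretive rather than technical. The histories $h_{i,k}^t$ in $U_i^t(m)$ are determined by $m$ itself. So $U_i^t$ is a fixed function on $\mathcal M_{i,L_i^t}$ that does not depend on $T_i$, and this is what makes the quotient-rule computation legitimate. I would note this explicitly: under the assumed Gibbs form, $T_i$ enters only through $\beta$.
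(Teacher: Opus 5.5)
Your proposal is correct and follows essentially the same route as the paper. The paper differentiates $\pi_T(m)$ directly in $T$ to get $\pi_T'(m)=-\tfrac{1}{T^2}\pi_T(m)\bigl(U(m)-\mathbb{E}_{\pi_T}[U]\bigr)$ and sums against $\nu$, which is the same computation as your quotient rule in $\beta=1/T_i$ followed by the chain rule; your remarks on finiteness of $\mathcal V_i$, on $U_i^t$ not depending on $T_i$, and on the mean value theorem supply rigor the paper leaves implicit.
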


\begin{proof}
For brevity, we omit the superscript and subscript.
\begin{align*}
E'(T)
&= \sum_{m\in\mathcal M}\nu(m)\pi_T'(m)\\
&= \frac{-1}{T^2}\sum_{m\in\mathcal M}\nu(m)\pi_T(m) \big(U(m)-\mathbb E_{\pi_T}[U]\big)\\
&= \frac{-1}{T^2}\big(\mathbb E_{\pi_T}[\nu U] -\mathbb E_{\pi_T}[\nu]\mathbb E_{\pi_T}[U]\big)\\
&= \frac{-1}{T^2}\operatorname{Cov}_{\pi_T}(\nu,U).
\end{align*}
The monotonicity follows immediately. 
\end{proof}

\subsection{Optimal Privacy-Utility Tradeoff}
In many practical systems, the privacy requirement is not a strict threshold but rather a design knob that should be balanced against response quality. In such cases, it is convenient to replace the hard constraint by a regularized objective that rewards stronger privacy protection. For example, a simple proxy is to reward higher temperature in the objective. Given a privacy preference parameter $\lambda_i \ge 0$, the \emph{regularized privacy-utility design problem} can be formulated as
\begin{equation}
    \max_{T_i>0}\;E_{i,L_i^t}(T_i)+\frac{\lambda_i}{L_i^t}T_i.
    \label{eq:opt_problem}
\end{equation}
\begin{proposition}[Optimal Temperature]
\label{thm:optimal_temperature}
Consider the regularized privacy-utility design problem defined in \ref{eq:opt_problem}, with the derivative of the expected utility satisfying \eqref{eq:temperature_derivative}. Then any interior optimal temperature $T_i^\ast>0$ satisfies the first-order condition $\frac{d}{dT_i}E_{i,L_i^t}(T_i^\ast)+\frac{\lambda_i}{L_i^t}=0$. Using \eqref{eq:temperature_derivative}, the optimal temperature is characterized by $$\frac{\lambda_i}{L_i^t}=\frac{1}{(T_i^\ast)^2}\operatorname{Cov}_{\pi^{T_i^\ast,L_i^t}_i}\big(\nu_i(m,L_i^t),U_i^t(m)\big).$$
\end{proposition}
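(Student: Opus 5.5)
The plan is the standard interior first-order argument, followed by substitution of the derivative formula from Proposition~\ref{prop:EU_derivative_monotone}. Fix $i$, $t$ and $L_i^t$, and define the objective
\[
J(T_i) \coloneqq E_{i,L_i^t}(T_i) + \frac{\lambda_i}{L_i^t} T_i, \qquad T_i \in (0,\infty).
\]
First I check that $J$ is differentiable on the open interval $(0,\infty)$. This follows from Proposition~\ref{prop:EU_derivative_monotone}, which asserts that $E_{i,L_i^t}$ is differentiable there under the Gibbs-form assumption, together with the fact that the linear term is smooth. Its derivative is
\[
J'(T_i) = \frac{d}{dT_i}E_{i,L_i^t}(T_i) + \frac{\lambda_i}{L_i^t}.
\]

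Next, let $T_i^\ast > 0$ be an interior maximizer. Because $(0,\infty)$ is open, $T_i^\ast$ has a neighborhood $(T_i^\ast - \eta, T_i^\ast + \eta)$ contained in the domain. Fermat's theorem then applies. The one-sided difference quotients
\[
\frac{J(T_i^\ast + s) - J(T_i^\ast)}{s}
\]
are $\le 0$ for $s > 0$ and $\ge 0$ for $s < 0$. Both converge to $J'(T_i^\ast)$, so $J'(T_i^\ast) = 0$. This is exactly the stated first-order condition
\[
\frac{d}{dT_i}E_{i,L_i^t}(T_i^\ast) + \frac{\lambda_i}{L_i^t} = 0.
\]

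Finally, I substitute \eqref{eq:temperature_derivative} evaluated at $T_i^\ast$:
\[
-\frac{1}{(T_i^\ast)^2}\operatorname{Cov}_{\pi_i^{T_i^\ast,L_i^t}}\big(\nu_i(m,L_i^t),U_i^t(m)\big) + \frac{\lambda_i}{L_i^t} = 0.
\]
Rearranging gives the claimed characterization. As a remark, since $\lambda_i \ge 0$, any interior optimum must have nonnegative covariance. This is consistent with the monotonicity part of Proposition~\ref{prop:EU_derivative_monotone}: the utility loss from raising the temperature is balanced against the privacy reward.

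The only step needing care is the justification of differentiability, in particular the interchange of derivative and sum behind \eqref{eq:temperature_derivative} when $\mathcal M_{i,L_i^t}$ is finite. For a finite vocabulary and fixed length this is a finite sum of smooth functions, so nothing more is needed. If $\mathcal V_i$ were infinite, I would instead invoke dominated convergence, using boundedness of $\nu_i$ and a uniform bound on $U_i^t$ near $T_i^\ast$. Since the statement takes \eqref{eq:temperature_derivative} as given, the main obstacle reduces to noting that interiority, not existence of a maximizer, is assumed. So no compactness argument is required.
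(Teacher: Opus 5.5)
Your proof is correct and follows the same route as the paper, whose one-line proof writes the first-order condition for an interior maximizer of \eqref{eq:opt_problem} and substitutes \eqref{eq:temperature_derivative}; your Fermat argument just makes the stationarity step explicit. One aside: your closing remark about an infinite $\mathcal V_i$ is not needed, and if a ``uniform bound on $U_i^t$'' means $\sup_m |U_i^t(m)|<\infty$, it is incompatible with $Z_{i,L_i^t}(T_i)<\infty$ on an infinite message set, so dominated convergence instead requires summability of $(1+|U_i^t(m)|)e^{U_i^t(m)/T_i}$ uniformly for $T_i$ near $T_i^\ast$.
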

\begin{proof}
    The proof follows Equation \eqref{eq:temperature_derivative}.
\end{proof}

%% file: experiment.tex
\section{Empirical Case Study}
We illustrate the proposed privacy-utility temperature selection framework using a generative language model. The objective is to show how the sampling temperature emerges from the interaction between model probabilities, logit score, and information score.

\subsection{Experimental Setup}
We prompt GPT-2 with the query: ``You are given a cybersecurity incident database. Incident records: $D$. Question: Based on the database, the most frequent attack type is''. Another prompt is similar except that $D$ is replaced with a neighboring dataset $D'$.
\begin{figure*}\vspace{+2mm}
    \centering
\begin{subfigure}[t]{0.32\textwidth}
    \includegraphics[width=\textwidth]{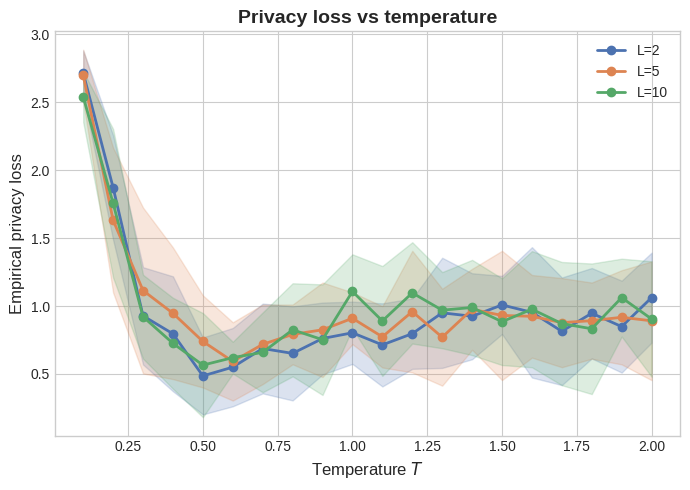}
    \caption{Empirical privacy loss.}
    \label{fig:privacy_T}
\end{subfigure}
\hfill
\begin{subfigure}[t]{0.32\textwidth}
    \includegraphics[width=\textwidth]{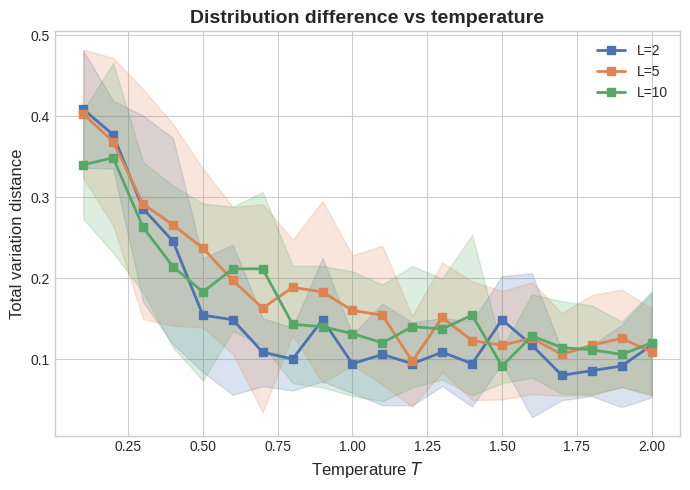}
    \caption{Total variation distance.}
    \label{fig:TV_T}
\end{subfigure}
\hfill
\begin{subfigure}[t]{0.32\textwidth}
    \includegraphics[width=\textwidth]{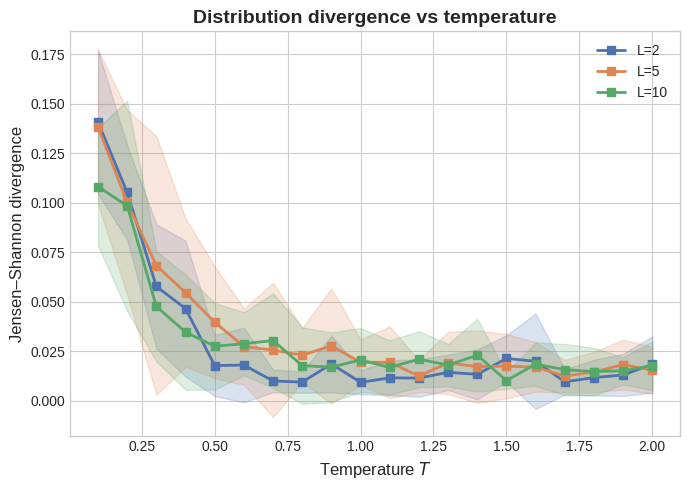}
    \caption{Jensen-Shannon divergence.}
    \label{fig:JS_T}
\end{subfigure}
\vspace{-0mm}\caption{Privacy leakage under different temperature. Lines show the means while shaded areas indicate the standard deviations.}\vspace{-0mm}
\label{fig:privacy}
\end{figure*}
\begin{figure*}
    \centering
\begin{subfigure}[t]{0.32\textwidth}
    \includegraphics[width=\textwidth]{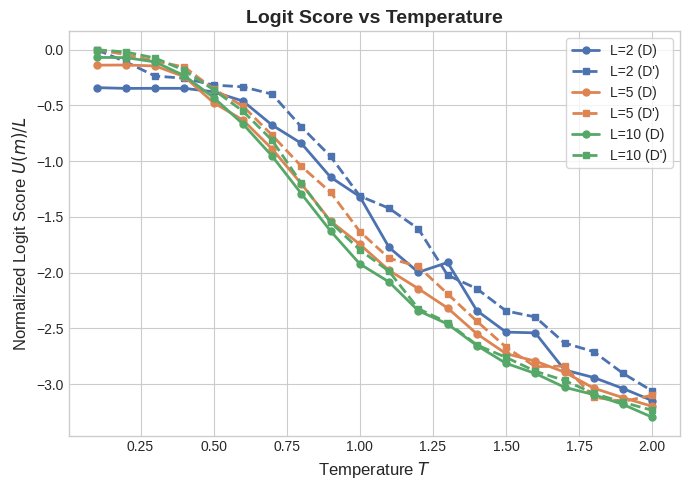}
    \caption{Cumulative logit score.}
    \label{fig:U_T}
\end{subfigure}
\hfill
\begin{subfigure}[t]{0.32\textwidth}
    \includegraphics[width=\textwidth]{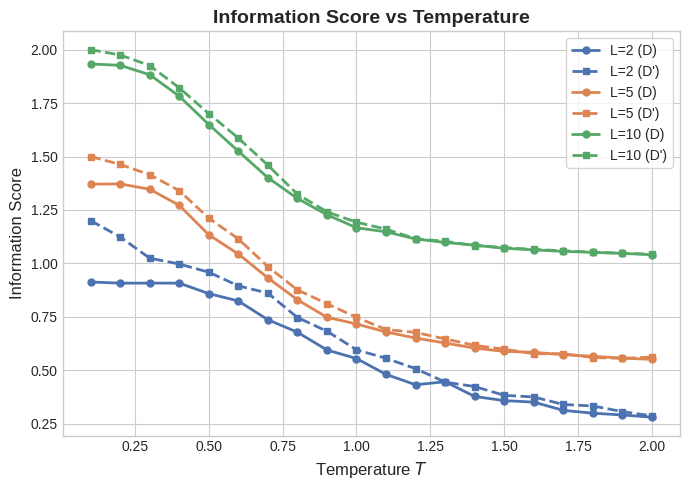}
    \caption{Information score.}
    \label{fig:IS_T}
\end{subfigure}
\hfill
\begin{subfigure}[t]{0.32\textwidth}
    \includegraphics[width=\textwidth]{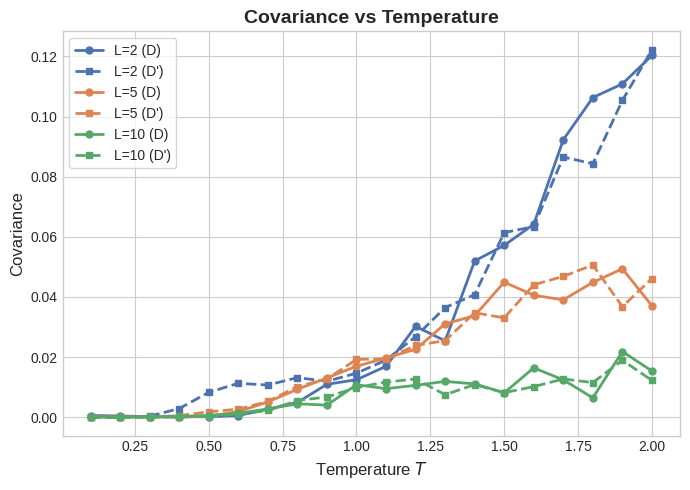}
    \caption{Covariance.}
    \label{fig:cov_T}
\end{subfigure}
\vspace{-0mm}\caption{The quantities in the proposed privacy–utility framework under different temperatures.}\vspace{-0mm}
\label{fig:measure}
\end{figure*}
We set length $L=2, 5,10$ and sample $250$ responses of each length with temperature $T$ from $0.1$ to $2.0$. 
\subsection{Metrics}
To empirically evaluate privacy leakage, we compare the output distributions of the model from neighboring datasets $D$ and $D'$ using several distributional metrics. Given sampled outputs mapped to a finite label space $\mathcal{Y}$, we first construct smoothed empirical distributions using Laplace smoothing:
$$
\hat{P}(y)=\frac{c_P(y)+\alpha}{n+\alpha|\mathcal{Y}|},\qquad
\hat{Q}(y)=\frac{c_Q(y)+\alpha}{n+\alpha|\mathcal{Y}|},
$$
where $c_P(y)$ and $c_Q(y)$ denote the counts of label $y$ in samples generated from datasets $D$ and $D'$, $n$ is the number of samples, and $\alpha>0$ prevents zero probabilities. 

Using these distributions, we estimate the smoothed empirical privacy loss $\hat{\varepsilon}=\max_{y\in\mathcal{Y}}\Big|\log(\hat{P}(y)/\hat{Q}(y))\Big|,$ and show the results in Figure \ref{fig:privacy_T}.
To complement this measure, we also compute the total variation distance $\mathrm{TV}(\hat{P},\hat{Q})=\frac{1}{2}\sum_{y\in\mathcal{Y}}\big|\hat{P}(y)-\hat{Q}(y)\big|,$ the Jensen-Shannon divergence $\mathrm{JS}(\hat{P},\hat{Q})=0.5\mathrm{KL}(\hat{P}\|M)+0.5\mathrm{KL}(\hat{Q}\|M),$ where $M=\tfrac{1}{2}(\hat{P}+\hat{Q})$, $\mathrm{KL}(\cdot\|\cdot)$ denotes the Kullback-Leibler divergence, and show them in Figure \ref{fig:TV_T} and \ref{fig:JS_T}, respectively.

Figure \ref{fig:measure} illustrates how the quantities in the proposed privacy–utility framework vary with temperature. The cumulative logit score is defined in Proposition \ref{prop:EU_derivative_monotone}, and the information score is defined as $\nu_i(m,L)=f_i(U_i(m),L)=e^{U_i(m)}+0.1L$, which is nondecreasing and convex with respect to the cumulative logit score variable.

\subsection{Discussion}
In Figure \ref{fig:privacy}, Across all message lengths, the empirical privacy loss, total variation distance, and Jensen–Shannon divergence generally decrease as the temperature increases. This trend indicates that higher temperatures reduce the sensitivity of the output distribution to changes in the underlying dataset, hence, may offer better privacy protection. Intuitively, increasing the temperature flattens the softmax distribution, which reduces the influence of individual logits and mitigates the impact of dataset differences on the generated outputs.


In Figure \ref{fig:measure}, the cumulative logit score decreases with increasing temperature. This behavior reflects the fact that higher temperatures encourage sampling from lower-probability tokens, reducing the average model confidence of generated responses. The information score exhibits a similar decreasing trend.
The covariance between the information score and the cumulative logit score remains nonnegative across the temperature range. According to the monotonicity result in Proposition \ref{prop:EU_derivative_monotone}, a nonnegative covariance implies that the expected information score is nonincreasing with respect to temperature. The empirical results here, therefore, provide consistent evidence.
In addition, longer responses accumulate more logit and information scores, which may slightly increase both privacy leakage and utility variation. This observation aligns with the analysis showing that privacy loss can grow with the number of generated tokens.

%% file: conclusion.tex
\section{Conclusion and Future Work}

This paper develops a principled framework for studying the privacy properties of AI agents powered by large language models. By modeling response generation as a stochastic mechanism over token sequences, we introduce differential privacy notions at both the token and message levels. The analysis establishes privacy bounds that relate privacy leakage to generation parameters such as temperature and message length. These results provide theoretical insight into how decoding strategies influence the sensitivity of model outputs to underlying datasets.
Building on the theoretical characterization, we then formulate an optimal privacy–utility design problem in which temperature serves as a controllable parameter for balancing response quality and privacy protection. Empirical experiments with GPT-2 illustrate the theoretical trends and show that increasing temperature generally reduces privacy leakage while lowering utility.
Future work includes extending the framework to multi-agent settings, incorporating more realistic utility functions, jointly optimizing temperature, message length, and other privacy-related control parameters, as well as conducting experiments on enterprise datasets and business-oriented LLMs.